\newtheorem{theorem}{Theorem}
\newtheorem{definition}{Definition}
\newtheorem{proposition}[theorem]{Proposition}
\newtheorem{lemma}[theorem]{Lemma}
\newtheorem{corollary}[theorem]{Corollary}
\newcommand{\PP}{\mathbb{P}}
\newcommand{\todo}[1]{\textcolor{red}{#1}}
\newcommand{\E}{\mathcal{E}}
\begin{document}
%
% paper title
% can use linebreaks \\ within to get better formatting as desired
\title{Exact Computation of Kullback-Leibler Distance for Hidden Markov Trees and Models}
%
%
% author names and IEEE memberships
% note positions of commas and nonbreaking spaces ( ~ ) LaTeX will not break
% a structure at a ~ so this keeps an author's name from being broken across
% two lines.
% use \thanks{} to gain access to the first footnote area
% a separate \thanks must be used for each paragraph as LaTeX2e's \thanks
% was not built to handle multiple paragraphs
%

%\author{Vittorio~Perduca~and~Gregory~Nuel}% <-this % stops a space
%\thanks{V. Perduca and G. Nuel are with the MAP5 - Laboratory of Applied Mathematics, University Paris Descartes, Paris, France, e-mail: \texttt{\{perduca, nuel\}@parisdescartes.fr}}% <-this % stops a space

\author{Vittorio~Perduca
        and~Gregory~Nuel\\  \todo{\emph{The present work is currently undergoing a major revision; a new version will be soon updated. \\ Please do not use the present version.}}% <-this % stops a space
\thanks{V. Perduca and G. Nuel are with the MAP5 - Laboratory of Applied Mathematics, Paris Descartes University and CNRS, Paris, France, email: \texttt{\{vittorio.perduca, gregory.nuel\}@parisdescartes.fr}.}}
\maketitle

\begin{abstract}
%\boldmath
We suggest new recursive formulas to compute the \emph{exact} value of the Kullback-Leibler distance (KLD) between two general Hidden Markov Trees (HMTs). For homogeneous HMTs with regular topology, such as homogeneous Hidden Markov Models (HMMs), we obtain a closed-form expression for the KLD when no evidence is given. We generalize our recursive formulas to the case of HMMs conditioned on the observable variables. Our proposed formulas are validated through several numerical examples in which we compare the exact KLD value with Monte Carlo estimations.
\end{abstract}
% IEEEtran.cls defaults to using nonbold math in the Abstract.
% This preserves the distinction between vectors and scalars. However,
% if the journal you are submitting to favors bold math in the abstract,
% then you can use LaTeX's standard command \boldmath at the very start
% of the abstract to achieve this. Many IEEE journals frown on math
% in the abstract anyway.

% Note that keywords are not normally used for peerreview papers.
\begin{IEEEkeywords}
%Kullback-Leibler distance, hidden Markov trees, dependence tree models, hidden Markov model, evidence, recursion \todo{??}
Hidden Markov models, dependence tree models, information entropy, belief propagation, Monte Carlo methods
\end{IEEEkeywords}

% For peer review papers, you can put extra information on the cover
% page as needed:
% \ifCLASSOPTIONpeerreview
% \begin{center} \bfseries EDICS Category: 3-BBND \end{center}
% \fi
%
% For peerreview papers, this IEEEtran command inserts a page break and
% creates the second title. It will be ignored for other modes.
\IEEEpeerreviewmaketitle

\section{Introduction}
% The very first letter is a 2 line initial drop letter followed
% by the rest of the first word in caps.
% 
% form to use if the first word consists of a single letter:
% \IEEEPARstart{A}{demo} file is ....
% 
% form to use if you need the single drop letter followed by
% normal text (unknown if ever used by IEEE):
% \IEEEPARstart{A}{}demo file is ....
% 
% Some journals put the first two words in caps:
% \IEEEPARstart{T}{his demo} file is ....
% 
% Here we have the typical use of a "T" for an initial drop letter
% and "HIS" in caps to complete the first word.
%\IEEEPARstart{T}{his} has to be written 

\IEEEPARstart{H}{idden} Markov Models (HMMs) are a standard tool in many applications, including signal processing \cite{crouse1998wavelet,ephraim2002hidden}, speech recognition \cite{Rabiner89atutorial,silva2008upper} and biological sequence analysis \cite{Durbin1999}. Hidden Markov Trees (HMTs, also called ``dependence tree models''), generalize HMMs on tree topologies, and are used in different contexts. In texture retrieval applications, they model the key features of the joint probability density of the wavelet coefficients of real-world data \cite{DO02}.

In estimation and classification contexts it is often necessary to compare different HMMs (or HMTs) through suitable distance measures. A standard (asymmetric) dissimilarity measure between two probability density functions $p$ and $q$ is the \textit{Kullback-Leibler distance} defined as \cite{bishop2006}:
\begin{equation*}
\label{eq:KLD_def}
D(p||q) = \int p \log \frac{p}{q}.
\end{equation*}
%Roughly speaking, the KLD between $p$ and $q$ measures the extra number of bits required for encoding events sampled from $p$ using a code based on $q$. 
An exact formula for the KLD between two Markov chains was introduced in \cite{rached2004kullback}. Unfortunately there is no such a closed-form expression for HMTs and HMMs, as pointed out by   
several authors \cite{do2003fast, silva2008upper, Mohammad_Sahraeian_2011}.

% reported that no closed form expression exists for the KLD between HMTs or HMMs and that its computation cannot be performed exactly.

%The standard numerical approximation is given by the Monte-Carlo method \cite{Juang_Rabiner_1985}:  
%$$
%D(p||q) \approx \frac{1}{N}\sum_{i=1}^{N}\log\frac{p(x_n)}{q(x_n)},
%$$
%where $x_1,\ldots,x_N$ is a random sample of the probability distribution associated to $p$. The Monte-Carlo method can be implemented only when it is possible to generate random samples. This is certainly the case for HMMs and dependence trees, when the probability distributions can be sampled via recursive formulas (i.e. the forward/backward and upward/downward algorithms \cite{Durbin1999}). However, in order to have a good approximation, $N$ has to be large and this leads to expensive computations which are not tractable in many applications. Moreover, the Monte-Carlo  estimate may vary between computations.  

To overcome this issue, several alternative similarity measures were introduced for comparing HMMs. Recent examples of such measures are based on a probabilistic evaluation of the match between every pair of states \cite{Mohammad_Sahraeian_2011}, HMMs' stationary cumulative distribution \cite{Zeng20101550} and transient behavior \cite{Silva_Narayanan_2006}. Other approaches are discussed in \cite{xie2007posteriori,mohammad_2005_novel}.

When it is mandatory to work with the actual KLD there are only two possibilities: 1) Monte Carlo estimation; 2) various analytical approximations. The former approach is easy to implement but also slow and inefficient. With regards to the latter, Do \cite{do2003fast} provided an upper bound for the KLD between two general HMTs. Do's algorithm is fast because its computational complexity does not depend on the size of the data. Silva and Narayan extended these results in the case of left-to-right transient continuous density HMMs \cite{silva2005upper,silva2008upper}. Variants of Do's result were discussed to consider the emission distributions of asynchronous HMMs (in the context of speech recognition) \cite{liu2007divergence} and marginal distributions \cite{xie2005probabilistic}. 

%\change{
%We need to say all this in a compact form:
%\begin{itemize}
%\item recursive formulas for general HMTs with no evidence
%\item closed formula for HMTs with regular topology with no evidence
%\item closed formula for HMMs with no evidence (generalization of \cite{rached2004kullback})
%\item exact KLD rate for HMMs with no evidence; proof that Do's bound is asymptotically sharp
%\item recursive formulas for general HMMs with evidence
%\item validation on three models: HMT of Do \cite{do2003fast}, HMM of Do without evidence, and with evidence.
%\end{itemize}
%}

In this paper, we provide recursive formulas to compute the \emph{exact} KLD between two general HMTs with no evidence. In the case of homogeneous HMTs with regular topology, we derive a closed-form expression for the KLD. In the particular case of homogeneous HMMs, this formula is a straightforward generalization of the expression given for Markov chains in \cite{rached2004kullback}. It turns out that the KLD expression we suggest is exactly the well known bound introduced in \cite{do2003fast}: as a consequence, the latter is not a bound but the actual value of the KLD. At last, we generalize our recursive formulas to compute the KLD between two HMMs conditioned on the observable variables. We validated our models by comparing the exact value of the KLD with Monte Carlo estimations in the following cases: 1) HMTs with no evidence;  2) HMMs with arbitrarily given evidence; 3) HMMs with no evidence. For comparison purposes, we experimented with the same sets of parameters as in the examples of  \cite{do2003fast}. 

\section{Hidden Markov Trees}

\subsection{The model}

In a HMT, each node is either a hidden variable $S_u$ or an observable variable $X_u$. Only hidden variables have children. We denote as $S_{\emptyset}$ the root of the tree and as $S_u$ the parent  of $X_u$,  see Figure~\ref{fig:tree}. The joint probability distribution over all the variables of the model factorizes as 
$$
\PP(X,S) = \PP(S_{\emptyset})\PP(X_{\emptyset}|S_{\emptyset})\prod_u\PP(S_u|S_{\text{parent}(u)})\PP(X_u|S_u).
$$

We denote each index $u$ by a (finite) concatenation of characters belonging to a given finite and nonempty ordered set $V=\{0,1,2,3,\ldots\}$. In particular, $u$ is a regular expression belonging to $\{\emptyset\}\cup V_1\cup V_2 \cup \ldots \cup V_{N-1}$, where $\emptyset$ is the null string, $V_{i+1} = \{ua|u\in V_i,\,a\in V\}$, and where $N$ is the \emph{tree depth}. Using this notation, $S_v$ is a children of $S_u$ if and only if there exists $a\in V$ such that $v=ua$. In the binary example, shown in Figure \ref{fig:tree}, $V=\{0,1\}$ and $N-1=2$. %\todo{length of $V$ can change from a level to another...}

\begin{figure}[!t]
\centering
\begin{tikzpicture}[scale=0.61, ->, thick]
    \tikzstyle{every node}=[]
    \scriptsize
    \node {$S_\emptyset$}
    [sibling distance = 3.7 cm]
        child { 
            node {$S_0$} 
            [sibling distance = 1.8 cm]
            child{ node {$S_{00}$}
                        [sibling distance = 1.24 cm]
                         %child{node{$S_{000}$}}
                         child{node{$X_{00}$}}
                         %child{node{$S_{001}$}}
             }
            child{node {$X_0$}}
            child{node{$S_{01}$}
            	   [sibling distance = 1.24 cm]
                      %child{ node{$S_{010}$}}
                      child{node{$X_{01}$}}
                      %child{node{$S_{011}$}}
            }
         }
        child {
            node {$X_{\emptyset}$}}
            child { 
                node {$S_1$}
                [sibling distance = 1.8 cm]
                child{ node {$S_{10}$} 
                           [sibling distance = 1.24 cm]
                           %child{node{$S_{100}$}}
                           child{node{$X_{10}$}}
                           %child{node{$S_{101}$}}
                }
                child{node{$X_1$}}
                child{ node {$S_{11}$}
                            [sibling distance = 1.24 cm]
                            %child{node{$S_{110}$}}
                            child{node{$X_{11}$}}
                            %child{node{$S_{111}$}}
                 }
           }
    ;
\end{tikzpicture}
\caption{Formalism for HMTs, binary tree with $N = 3$.}
\label{fig:tree}
\end{figure}
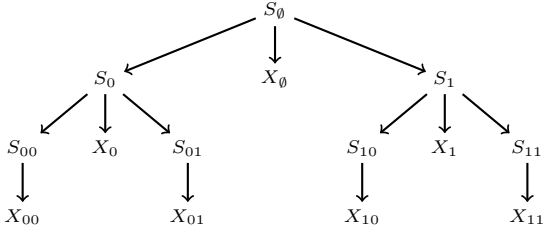

The parameters of the model are $\mathbb{P}(X_u=x | S_u=s)=e^u(s,x)$ (\textit{emissions}), $\mathbb{P}(S_{ua}=s | S_u=r)=\pi^{ua}(r,s)$ with $a\in V$ (\textit{transitions}), and $\mathbb{P}(S_{\emptyset}=s)=\mu(s)$ \footnote{For the sake of simplicity, we consider discrete variables, however it is straightforward to extend our results to the case of continuous variables, an example is in the Supplementary Material.}. We denote the set of parameters by $\theta$; $\PP_{\theta}$ denotes a probability distribution under $\theta$.

In the applications, we are often interested in $\PP(S|X=x)=\PP(X,S|\mathcal{E})$ where 
$\mathcal{E}=\{X=x \}$ is the \emph{evidence}.
%In the applications, it is often the case that all the variables $X$ are actually observed, i.e. an evidence $\mathcal{E}=\{X=x \}$ is given. In this case one is interested in the conditional probability distribution $\PP(X,S|\mathcal{E})=\PP(S|X=x)$.
Note that the notion of evidence can be generalized so to consider any subsets $\mathcal{X},\mathcal{S}$ of the sets of all possible outcomes of $X$ and $S$: $\mathcal{E}=\{X\in \mathcal{X},S\in \mathcal{S}\}$. For ease of notation, we consider only the cases when either no evidence is given or the evidence is  $\mathcal{E}=\{X=x\}$. We will explicitly develop the latter case only for HMMs, see Section \ref{subsec:hmm_evidence}, however it is easy to extend our results to the more general case of HMTs.

%It is important to note note that despite their names, the variables $X_u$ may or may not be actually observed. In the latter case, the variables in the model are conditioned to the evidence $\mathcal{E}=\{X_{u}=x_{u}, \mbox{ for all } u \}$. The notion of evidence can be easily generalized so to consider any subset $\mathcal{X}$ of the set of all possible outcomes: $\mathcal{E}=\{S,X\in \mathcal{X}\}$. For ease of notation, we consider only the case when no evidence is given in the tree, however it is straightforward to extend our results including any kind of evidence (on $S$ and/or $X$). \todo{OK? Give better explanation of general evidence?}
%At last, we suppose that the state spaces of $S_u$ and $X_u$ are $\{1,\ldots,\ell_{S_u}\}$ and $\{1,\ldots,\ell_{X_u}\}$ respectively. \todo{mettre plus tard?}

\subsection{Recursive formulas for exact KLD computation}

We derive recursive formulas for computing the exact Kullback-Leibler distance $D(\theta_1||\theta_0)=D(\PP_{\theta_1}(X,S)||\PP_{\theta_0}(X,S))$ between two HMTs having the same underlying topology $T$ and two distinct sets of parameters $\theta_1,\theta_0$. 

\begin{definition}
Given an index $u$ and $a\in V$, consider the variables $\{X_{ua-},S_{ua-}\}$ in the subtree $T_{ua}$ of $T$ rooted at $S_{ua}$ (e.g. $X_{01101}$ is in the subtree $T_{011}$ rooted at $S_{011}$, here $u=01$, $a=1$ and $-=01$). We define the \textit{inward quantity} $K_{ua\rightarrow u}(S_u)$ as the KLD between the conditional probability distributions of $\{X_{ua-},S_{ua-}\}$ given $S_u$, with parameters $\theta_1$ and $\theta_0$ respectively:
\begin{multline}
K_{ua\rightarrow u}(S_{u})=\\
D\left[\PP_{\theta_1}(X_{ua-},S_{ua-}|S_u)||\PP_{\theta_0}(X_{ua-},S_{ua-}|S_u)\right]
\end{multline}
%$K_{ua\rightarrow u}(S_{u})=$
%\begin{equation}
%\label{eq:kld_def}
%\sum_{X_{ua-},S_{ua-}}\PP_{\theta_1}(X_{ua-},S_{ua-}|S_u)\log\frac{\PP_{\theta_1}(X_{ua-},S_{ua-} |S_u)}%{\PP_{\theta_0}(X_{ua-},S_{ua-} | S_u)}.
%\end{equation}
where $ua-$ is reduced to $ua$ in the particular case when $X_{ua}$ is a leaf of the tree.
%In particular if $X_{ua}$ is a leaf, with $a\in V$, then $K_{uab\rightarrow ua}(S_{ua})=0$ for each $b\in V$ (e.g. in Figure \ref{fig:tree}: $K_{010\rightarrow 01}(S_{01})=0$). Moreover, observe that if $X_{ua}$ is a leaf, then $S_{ua}$ has no other children, and therefore  $K_{ua\rightarrow u}(S_{u})$ is given by the formula above with $ua- = ua$.
%we simply have $K_{ua\rightarrow u}(S_{u})= $
%$$
%\sum_{X_{ua},S_{ua}}\PP_{\theta_1}(S_{ua},X_{ua}|S_{u})\log\frac{\PP_{\theta_1}(S_{ua},X_{ua}|S_{u})}{\PP_{\theta_0}(S_{ua},X_{ua}|S_{u})}.
%$$
\end{definition}

Our first results are the following simple formulas that make it possible to compute the inward quantities and the (exact) KLD recursively (proofs in the Supplementary Material): 
\begin{theorem} 
\label{prop:tree_inward_initial}
\begin{multline}
K_{ua\rightarrow u}(S_u)=\\
\label{eq:tree_inward_initial}
\sum_{X_{ua},S_{ua}}\PP_{\theta_1}(X_{ua},S_{ua}|S_u)\left(\log\frac{\PP_{\theta_1}(X_{ua},S_{ua}|S_u)}{\PP_{\theta_0}(X_{ua},S_{ua}|S_u)} \right. + \\
 \left. \sum_{b\in V} K_{uab\rightarrow ua}(S_{ua})\right)
\end{multline}
with the convention that when $X_{ua}$ is a leaf, with $a\in V$, we have $K_{uab\rightarrow ua}(S_{ua})=0$ for each $b\in V$. Moreover,
\begin{multline}
D(\theta_1|| \theta_0)=\\
\label{eq:tree_kld_initial}
\sum_{X_{\emptyset},S_{\emptyset}}\PP_{\theta_1}(X_{\emptyset},S_{\emptyset})\left(\log\frac{\PP_{\theta_1}(X_{\emptyset},S_{\emptyset})}{\PP_{\theta_0}(X_{\emptyset},S_{\emptyset})} + \sum_{a\in V} K_{a\rightarrow\emptyset}(S_{\emptyset})\right).
\end{multline}
\end{theorem}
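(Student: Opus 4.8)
The plan is to prove both identities by a single mechanism: factorize the relevant conditional law according to the Markov structure of the tree, then split the resulting log-ratio and marginalize. This is nothing but the chain rule for the Kullback-Leibler distance adapted to a branching process. I would treat the recursion for the inward quantity $K_{ua\rightarrow u}(S_u)$ in full, since the formula \eqref{eq:tree_kld_initial} for $D(\theta_1||\theta_0)$ then follows by running the identical argument at the root $S_\emptyset$.

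First I would abbreviate by $Z_{ua}=\{X_{ua-},S_{ua-}\}$ the collection of all variables in the subtree $T_{ua}$, and decompose it as $Z_{ua}=\{S_{ua},X_{ua}\}\cup\bigcup_{b\in V}Z_{uab}$. From the global factorization of $\PP(X,S)$ together with the conditional-independence structure of the tree, conditionally on $S_{ua}$ the emission $X_{ua}$ and the child subtrees $Z_{uab}$ are mutually independent and independent of $S_u$. This gives the key factorization
\begin{multline*}
\PP_{\theta}(Z_{ua}|S_u)=\\
\PP_{\theta}(X_{ua},S_{ua}|S_u)\prod_{b\in V}\PP_{\theta}(Z_{uab}|S_{ua}),
\end{multline*}
valid for both $\theta\in\{\theta_1,\theta_0\}$.

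Next I would insert this factorization into the definition $K_{ua\rightarrow u}(S_u)=\sum_{Z_{ua}}\PP_{\theta_1}(Z_{ua}|S_u)\log\frac{\PP_{\theta_1}(Z_{ua}|S_u)}{\PP_{\theta_0}(Z_{ua}|S_u)}$ and split the logarithm into the contribution from $\PP_{\theta}(X_{ua},S_{ua}|S_u)$ and one contribution per child $b$ from $\PP_{\theta}(Z_{uab}|S_{ua})$. In the first term, summing out every variable other than $(X_{ua},S_{ua})$ leaves each sibling factor equal to $1$ and reproduces exactly the $\log$-term in \eqref{eq:tree_inward_initial}. In the $b$-th child term, summing out $X_{ua}$ and every sibling subtree $Z_{uab'}$ with $b'\neq b$ collapses $\PP_{\theta_1}(X_{ua},S_{ua}|S_u)$ to $\PP_{\theta_1}(S_{ua}|S_u)$ and isolates $\sum_{Z_{uab}}\PP_{\theta_1}(Z_{uab}|S_{ua})\log\frac{\PP_{\theta_1}(Z_{uab}|S_{ua})}{\PP_{\theta_0}(Z_{uab}|S_{ua})}=K_{uab\rightarrow ua}(S_{ua})$, yielding $\sum_{S_{ua}}\PP_{\theta_1}(S_{ua}|S_u)K_{uab\rightarrow ua}(S_{ua})$. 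Since $K_{uab\rightarrow ua}(S_{ua})$ does not depend on $X_{ua}$, this equals $\sum_{X_{ua},S_{ua}}\PP_{\theta_1}(X_{ua},S_{ua}|S_u)K_{uab\rightarrow ua}(S_{ua})$, which recombines with the first term into precisely \eqref{eq:tree_inward_initial}. The leaf case, where no child subtrees exist and $Z_{ua}=\{S_{ua},X_{ua}\}$, is recovered directly from the stated convention $K_{uab\rightarrow ua}(S_{ua})=0$.

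Finally, for \eqref{eq:tree_kld_initial} I would apply the same decomposition at the root, writing $\PP_{\theta}(X,S)=\PP_{\theta}(X_\emptyset,S_\emptyset)\prod_{a\in V}\PP_{\theta}(Z_a|S_\emptyset)$ and repeating the split-and-marginalize step verbatim, with $K_{a\rightarrow\emptyset}(S_\emptyset)$ playing the role of the child inward quantities. The only step demanding genuine care — and the one I expect to be the main obstacle to state cleanly rather than conceptually — is the bookkeeping of the marginalizations: one must track that each conditional law $\PP_{\theta_1}(Z_{uab'}|S_{ua})$ sums to $1$ over its own argument (so irrelevant sibling subtrees disappear) while the $\theta_1$-weighting is retained only along the variables that actually appear in each log-ratio. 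Once the factorization above is established, the rest is routine algebra.
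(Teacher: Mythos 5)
Your proof is correct and follows essentially the same route as the paper's: the paper's (much terser) proof rests on exactly your key factorization $\PP_{\theta}(X_{ua-},S_{ua-}|S_u)=\PP_{\theta}(X_{ua},S_{ua}|S_u)\,\PP_{\theta}(X_{uab-},S_{uab-}\ \forall b\in V\,|\,S_{ua})$ together with the observation that the child subtrees partition the remaining variables, and your argument simply makes explicit the per-child product factorization and the split-and-marginalize bookkeeping that the paper leaves implicit.
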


\subsection{Homogeneous trees with constant number of children}

%In the homogeneous case Formulas (\ref{eq:tree_inward_initial}) and (\ref{eq:tree_kld_initial}) can be simplified even further:

%\begin{corollary} If the transition and emission probabilities are the same across the whole tree, then: $K_{ua\rightarrow u}(S_u)=$
%\label{prop:tree_inward}
%\begin{equation}
%\label{eq:tree_inward}
%k(S_u) + \sum_{S_{ua}}\PP_1(S_{ua}|S_u)\sum_{b\in V}K_{uab\rightarrow ua}(S_{ua}),
%\end{equation}
%where  
%\begin{equation}
%\label{eq:tree_k}
%k(r) = \sum_{x,s} \pi_1(r,s)e_1(r,x)\log\frac{ \pi_1(r,s)e_1(r,x)}{ \pi_0(r,s)e_0(r,x)}.
%\end{equation}
%Moreover
%\begin{equation}
%\label{eq:tree_kld}
%D(\theta_1|| \theta_0) = k_{\lambda} + \sum_{S_{\lambda}}\PP_1(S_{\lambda})\sum_{a\in V}K_{a\rightarrow \lambda}(S_{\lambda}),
%\end{equation}
%where 
%$$
%k_{\lambda} = \sum_{X_{\lambda},S_{\lambda}}\PP_1(X_{\lambda},S_{\lambda})\log\frac{\PP_1(X_{\lambda},S_{\lambda})}{\PP_0(X_{\lambda},S_{\lambda})}.
%$$
%\end{corollary}

When the tree is homogeneous and the nodes $S$ have the same number of children (e.g. when $T$ is binary as in Figure \ref{fig:tree}), Eqs. (\ref{eq:tree_inward_initial}) and (\ref{eq:tree_kld_initial}) can be further simplified:

\begin{corollary}
\label{prop:tree_hom_bin}
Suppose that the transition and emission probabilities are the same across the whole tree and each variable of type $S$ has exactly $C$ children of type $S$, then for each $a,a'\in V$: $K_{ua\rightarrow u}(S_u)=K_{ua'\rightarrow u}(S_u)$. In particular, if $X_{ua}$ is not a leaf, then for each $a\in V$:
\begin{multline}
K_{ua\rightarrow u}(S_u) = \\
\label{eq:in_hom_C}
k(S_u) + C\sum_{S_{u0}}\PP_{\theta_1}(S_{u0}|S_u)K_{u00\rightarrow u0}(S_{u0}), 
\end{multline}
where
$k(S_u=r)=D[\PP_{\theta_1}(X_0,S_0|S_{\emptyset}=r)||\PP_{\theta_0}(X_0,S_0|S_{\emptyset}=r)]=k(r)$.
%\change{
%\begin{equation}
%k(r)=D[\PP_{\theta_1}(X_0,S_0|S_{\emptyset})||\PP_{\theta_0}(X_0,S_0|S_{\emptyset})].
%\label{eq:tree_k}
%\end{equation}}
%\begin{equation}
%\label{eq:tree_k}
%k(r) = \sum_{x,s} \pi_{\theta_1}(r,s)e_{\theta_1}(s,x)\log\frac{ \pi_{\theta_1}(r,s)e_{\theta_1}(s,x)}{ \pi_{\theta_0}(r,s)e_{\theta_0}(s,x)}.
%\end{equation}
Moreover
\begin{equation}
\label{eq:kld_hom_C}
D(\theta_1||\theta_0) = k_{\emptyset}+C\sum_{S_{\emptyset}}\PP_{\theta_1}(S_{\emptyset})K_{0\rightarrow \emptyset}(S_{\emptyset}),
\end{equation}
where $k_{\emptyset} = D[\PP_{\theta_1}(X_\emptyset,S_{\emptyset})||\PP_{\theta_0}(X_\emptyset,S_{\emptyset})]$.
%$$
%k_{\emptyset} = \sum_{x,s} \mu_{\theta_1}(s)e_{\theta_1}(s,x)\log\frac{ \mu_{\theta_1}(s)e_{\theta_1}(s,x)}{ %\mu_{\theta_0}(s)e_{\theta_0}(s,x)}.
%$$
\end{corollary}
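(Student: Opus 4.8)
The plan is to derive both displayed identities directly from Theorem~\ref{prop:tree_inward_initial} by exploiting the two structural hypotheses: homogeneity (identical transitions $\pi$ and emissions $e$ at every node) and the constant branching factor $C=|V|$. The central observation is that under these hypotheses the local conditional law $\PP_{\theta}(X_{ua},S_{ua}\,|\,S_u)$ does not depend on the particular index $ua$ but only on the \emph{values} taken by $S_u$ and $(X_{ua},S_{ua})$; moreover, all subtrees $T_{ua}$ rooted at a node of a given depth are isomorphic and carry identical parameters. From this I expect the inward quantities to be insensitive to the label $a$.

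First I would prove the index-independence claim $K_{ua\rightarrow u}(S_u)=K_{ua'\rightarrow u}(S_u)$ by induction on the height of the subtree $T_{ua}$, which by regularity equals that of $T_{ua'}$. In the base case $S_{ua}$ sits at the deepest level, so by the leaf convention of Theorem~\ref{prop:tree_inward_initial} the recursion (\ref{eq:tree_inward_initial}) reduces to the single logarithmic term $\sum_{X_{ua},S_{ua}}\PP_{\theta_1}(X_{ua},S_{ua}|S_u)\log\frac{\PP_{\theta_1}(X_{ua},S_{ua}|S_u)}{\PP_{\theta_0}(X_{ua},S_{ua}|S_u)}$; homogeneity makes each factor a fixed function of the underlying values, so the sum depends on $S_u$ alone and not on $a$, and in fact equals $k(S_u)$. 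For the inductive step I would feed the hypothesis $K_{uab\rightarrow ua}=K_{uab'\rightarrow ua}$ into (\ref{eq:tree_inward_initial}): the logarithmic term is again $k(S_u)$ by homogeneity, and the remaining term inherits index-independence from the children, so $K_{ua\rightarrow u}(S_u)$ is independent of $a$.

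Next I would turn the recursion into the closed form (\ref{eq:in_hom_C}) whenever $S_{ua}$ has hidden children. By the inductive step the logarithmic term is exactly $k(S_u)$. In the second term, the inner sum $\sum_{b\in V}K_{uab\rightarrow ua}(S_{ua})$ collapses to $C\,K_{u00\rightarrow u0}(S_{ua})$ because there are $C$ equal summands; since this no longer involves $X_{ua}$, I can sum $X_{ua}$ out against $\PP_{\theta_1}(X_{ua},S_{ua}|S_u)$ to leave $\PP_{\theta_1}(S_{ua}|S_u)$, and then relabel the dummy value $S_{ua}\mapsto S_{u0}$ using index-independence of both the transition law and the inward quantity. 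This yields $C\sum_{S_{u0}}\PP_{\theta_1}(S_{u0}|S_u)K_{u00\rightarrow u0}(S_{u0})$ and hence (\ref{eq:in_hom_C}). Finally, the master formula (\ref{eq:kld_hom_C}) follows by applying precisely the same three manipulations (identify the log-term with $k_{\emptyset}$, collapse the $C$ equal inward quantities $K_{a\rightarrow\emptyset}=K_{0\rightarrow\emptyset}$, and sum out $X_{\emptyset}$) to the top-level recursion (\ref{eq:tree_kld_initial}).

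The main obstacle I anticipate is purely bookkeeping rather than conceptual: making the statement ``homogeneity implies the relevant quantities are functions of values, not of indices'' rigorous and carrying it consistently through the induction, so that the dummy relabelling $S_{ua}\mapsto S_{u0}$ is justified simultaneously for the transition probability and for the inward quantity. Everything else reduces to recognizing the $C$-fold repetition of identical terms and marginalizing out the observable $X_{ua}$.
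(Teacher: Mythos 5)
Your proposal is correct and follows essentially the same route as the paper: the paper first isolates, as a separate lemma, the homogeneous recursion in which $X_{ua}$ is marginalized out and the log-term is identified with $k(\cdot)$, and then runs the same leaves-to-root induction you describe to get label-independence and the $C$-fold collapse. Your version merely merges the lemma into the induction, and even shares the paper's one point of bookkeeping looseness (the induction hypothesis must give that inward quantities at a whole level coincide as functions, not just independence of the last child label, to justify the relabelling $K_{uab\rightarrow ua}=K_{u00\rightarrow u0}$), which you explicitly flag.
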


By writing $\boldsymbol{\mu},\boldsymbol{k},\boldsymbol{\pi}$ as a row, a column and a square matrix respectively, we obtain the following closed formula: $D(\theta_1||\theta_0)=$
\begin{equation}
\label{eq:tree_kld_matrix}
k_{\emptyset} + \boldsymbol{\mu}_{\theta_1} (C \boldsymbol{I} + C^2\boldsymbol{\pi}_{\theta_1} + C^3\boldsymbol{\pi}^2_{\theta_1} + \ldots + C^{N-1}\boldsymbol{\pi}^{N-2}_{\theta_1}) \boldsymbol{k},
\end{equation} 
where $N$ is the depth of the tree, $\boldsymbol{I}$ the identity matrix and each node of type $S$ has exactly $C$ children of type $S$. 

%\begin{corollary} \todo{If $I - C\pi_{\theta_1}$ is invertible, then} $D(\theta_1||\theta_0)=$
%\begin{equation}
%\label{eq:closed_tree}  
%k_{\emptyset} + C\mu_{\theta_1} \times (I - (C\pi_{\theta_1})^{N-1})\times(I-C\pi_{\theta_1})^{-1} \times k.
%\end{equation}
%\end{corollary}
%\todo{or $(I-C\pi_{\theta_1})^{-1}\times (I - (C\pi_{\theta_1})^{N-1})$? }This closed formula makes it possible to compute the KLD in $O(J^3 \times\log_2 N)$ operations, where $J$ is the number of hidden states. 

%By taking advantage of the convergence of $\mu \times \pi^i$ towards $\sigma$ the stationary distribution of $\pi$ for large $i$, one can derive the following approximation for all $0 \leqslant d\geqslant n-3$:
%Moreover, for an arbitrary fixed $d$ we can consider the KLD approximation  
%$$
%D_d(\theta_1||\theta_0)= k_{\emptyset} + \left(\mu \times \sum_{i=0}^d C^{i+1}\pi +\frac{C^{d+2}-C^n}{1-C} \sigma \right)\times k,
%$$ 
%where $\mu\times\pi^d\rightarrow \sigma$ \todo{OK? Ajouter qqch?}. The error $|| D(\theta_1||\theta_2) - D_d(\theta_1||\theta_0)||$ is $O(\tau^dC^{d+2})$, where $\tau$ is the second eigenvalue of $\pi$, because $|| \mu\times \pi^d -\sigma || = O(\tau^d)$ \todo{OK?}. It follows that we can compute in \todo{compute} operations an approximation of the KLD with error $O(\tau^dC^{d+2})$ \todo{OK?}. In the applications, $\mu \times \pi^d$ converges quickly to $\sigma$ because $\tau$ has a small value.

\section{Hidden Markov Models}
\label{sec:hmm}

With reference to the notations used in the previous section, a HMM is a HMT in which each variable of type $S$ has only one child of type $S$ (i.e. $C=1$). In particular we can rename the variables so that $S=S_{1:N}$ is the hidden (Markov) sequence and $X=X_{1:N}$ is the sequence of observable variables.   
In the homogeneous case, the parameters of the model are $\mu(s) = \PP(S_1=s)$, $\pi(r,s)=\PP(S_i=s|S_{i-1}=r)$, $e(s,x)=\PP(X_i=x|S_i=s)$. 

\subsection{No Evidence}
When the variables $X_{1:N}$ are not actually observed (that is, there is no evidence in the model), all the formulas derived in the general case of trees continue to hold. Eq.~(\ref{eq:tree_kld_matrix}) gives the KLD between two homogeneous HMMs when no evidence is given:
\begin{equation}
\label{eq:kld_sum_hmm}
D(\theta_1||\theta_0) = k_\emptyset + \boldsymbol{\mu}_{\theta_1} ( \boldsymbol{I} + \boldsymbol{\pi}_{\theta_1} + \ldots +\boldsymbol{\pi}^{N-2}_{\theta_1}) \boldsymbol{k},
\end{equation}
where $k_\emptyset$ and $\boldsymbol{k}$ are defined exactly as in the previous section.
%$$
%k_1 =  \sum_{x,s} \mu_{\theta_1}(s)e_{\theta_1}(x,s)\log\frac{ \mu_{\theta_1}(s)e_{\theta_1}(x,s)}{ %\mu_{\theta_0}(s)e_{\theta_0}(x,s)},
%$$
%and $\boldsymbol{k}$ is defined exactly as in Equation (\ref{eq:tree_k}).
Note that this formula is a straightforward extension of the results for Markov chains proved in Theorem~1, \cite{rached2004kullback}. Moreover, it can be proved that the closed-form expression in Eq.~(\ref{eq:kld_sum_hmm}) is exactly the bound given in Eq.~(19)~\cite{do2003fast}, see the Supplementary Material for the details.

Let $\boldsymbol{\nu}$ be the stationary distribution of $\boldsymbol{\pi}_{\theta_1}$, then $\boldsymbol{\mu}_{\theta_1} \boldsymbol{\pi}_{\theta_1}^i \boldsymbol{k}$ converges towards $\boldsymbol{\nu} \boldsymbol{k}$ for large $i$. From Eq.~(\ref{eq:kld_sum_hmm}), by simply computing a Ces\`aro mean limit, we otbain the KLD \textit{rate}
\begin{equation}
\label{eq:kldr}
\bar{D}(\theta_1||\theta_0):=\lim_{N\rightarrow+\infty}\frac{D(\theta_1||\theta_0)}{N}  = \boldsymbol{\nu} \boldsymbol{k}.
\end{equation}
%\todo{J'ai enlev\'e la partie qui disait que la borne de Do est asymp. sharp...}%It is simple to verify that $\boldsymbol{\nu}\boldsymbol{k}$ is exactly Do's bound \cite{do2003fast}, see the Supplementary Material for the details. \change{Equation~(\ref{eq:kldr}) hence proves that Do's bound is asymptotically sharp.}
As observed in \cite{do2003fast}\footnote{$\boldsymbol{\nu} \boldsymbol{k}$ is exactly the bound for the KLD rate given in \cite{do2003fast}.}, $\boldsymbol{\nu} \boldsymbol{k}$ can be computed in constant time with $N$ whereas the exact closed formula of Eq.~(\ref{eq:kld_sum_hmm}) is computable in $O(N)$ with a direct implementation, or in $O(\log_2(N))$ with a more sophisticated approach (see the Supplementary Material for the details).

\subsection{$X$s observed}
\label{subsec:hmm_evidence}

Now we assume that the variables of type $X$ are actually observed, as it is often the case in practice. In particular, we consider the evidence $\mathcal{E} = \{X_{1:N} = x_{1:N}\}$ and we want to compute $D(\PP_{\theta_1}(X,S|\mathcal{E})\,||\, \PP_{\theta_0}(X,S|\mathcal{E}))=D(\PP_{\theta_1}(S|\mathcal{E})\,||\, \PP_{\theta_0}(S|\mathcal{E}))$.

For the sake of simplicity, we can denote the inward quantity indexed by $i+1\rightarrow i$ simply as $K^{\mathcal{E}}_i(S_i)$. Eqs.~(\ref{eq:tree_inward_initial}) and~(\ref{eq:tree_kld_initial}) become: $K^{\mathcal{E}}_{i-1}(S_{i-1})=$
$$
%\label{eq:hmm_rec_ev}
 \sum_{S_i} \mathbb{P}_{\theta_1}(S_i | S_{i-1},\E) \left(\log
\frac{\mathbb{P}_{\theta_1}(S_i | S_{i-1},\E)}{\mathbb{P}_{\theta_0}(S_i | S_{i-1},\E)}
+K^{\mathcal{E}}_{i}(S_i)
\right),
$$
for $i=n,\ldots,2$; $D(\PP_{\theta_1}(S|\mathcal{E}) ||\, \PP_{\theta_0}(S|\mathcal{E}))=$
$$
\sum_{S_1}\PP_{\theta_1}(S_1|\E)\left(\log\frac{\PP_{\theta_1}(S_1|\E)}{\PP_{\theta_0}(S_1|\E)} + K^{\mathcal{E}}_1(S_1)\right).
$$

The conditional probabilities $\PP(S_i|S_{i-1},\E)$ are computed recursively \cite{Rabiner89atutorial}: for instance, one can consider the \textit{backward quantities} $B_i(s) = \PP(X_{i+1:N}=x_{i+1:N}|S_i=s)$.
%$B_i(s) := \PP(\E|S_i=s)$
In the homogeneous case\footnote{In the heterogeneous case we can classically derive similar formulas.}, these are computed recursively from $B_n(s)=1$ with
$B_{i-1}(r) = \sum_{s}\pi(r,s)e(s,x_i)B_i(s)$, for $i=n,\ldots,2$.
Then we obtain the following conditional probabilities:
$\PP(S_i =s|S_{i-1} = r, \mathcal{E}) = \pi(r,s)e(s,x_i)B_i(s)/B_{i-1}(r)$, and
$\PP(S_1 =s |\E) \propto \mu(s)e(s,x_1)B_1(s)$.

%$$
%B_{i-1}(r) = \sum_{s}\pi(r,s)e(s,x_i)B_i(s), \mbox{ for } i=n,\ldots,2.
%$$
%In particular  
%$$
%\PP(S_i = S|S_{i-1} = r, \mathcal{E}) = \frac{\pi(r,s)e(s,x_i)B_i(s)}{B_{i-1}(r)},
%$$
%and moreover 
%$$
%\PP(S_1|\E) =\frac{\PP(S_1,\E)}{\PP(\E)}=\frac{B_1(s)\mu(s)}{\sum_rB_1(r)\mu(r)}.
%$$ 

\section{Numerical Experiments}

We ran numerical experiments to compare our exact formulas with Monte Carlo approximations.

%\change{j'ai change l'ordre pour respecter le meme que dans la partie method.}

\textbf{HMTs, no evidence.} We compared the exact value and Monte Carlo estimations of the KLD for the pair of trees considered in \cite{do2003fast}. In these trees, the variables of type $X$ are mixtures of two zero-mean Gaussians: we can easily adapt Eq.~(\ref{eq:tree_inward_initial}) to this case as shown in the Supplementary Material. The exact value of the KLD is 0.690. The results in Table \ref{tab} show that an important number of simulations is necessary for the MC estimations to approximate properly the exact KLD value. We computed the bound suggested by Do in \cite{do2003fast} and obtained a value which is different from the one shown in Figure~3 of \cite{do2003fast}; in particular the value of Do's bound turned out to be the same as the value of the exact KLD. This inconsistency is probably due to a minor numerical issue in \cite{do2003fast} and can be safely ignored because Monte Carlo estimations clearly validate our computations.

\begin{table}[t!]
\caption{HMTs with no evidence, exact KLD = 0.690.} %Format standard tableau du journal: assez moche..
\label{tab}
$$
\begin{array}{|c|c|c|}
\hline
\mbox{Trials} & \mbox{MC} & 95\% \mbox{ CI} \\
\hline
10^2 & 0.752  & [0.580,0.925] \\
\hline
10 ^3 & 0.673 & [0.616,0.730] \\
\hline
10^4 & 0.691 & [0.673,0.709] \\
\hline
10^5 & 0.690 & [0.684,0.696] \\
\hline
10^6 & 0.688 & [0.687,0.690] \\
\hline
\end{array}
$$
\end{table}

%\begin{figure}[!t]
%\centering
%\includegraphics[width=0.45\textwidth]{hmt_ci}
%\caption{HMTs with no evidence. 95\% confidence intervals shown for the MC approximations.}
%\label{fig:tree_noevidence}
%\end{figure}

%\subsection{HMMs with no evidence}
\textbf{HMMs, no evidence.} We experimented with the pair of discrete HMMs considered in \cite{do2003fast}, the two sets of parameters can be found in the Supplementary Material. We implemented Eqs.~(\ref{eq:kld_sum_hmm}),~(\ref{eq:kldr}) for computing $D(\theta_1||\theta_0)/N$ and the KLDR. For Monte Carlo estimations, we ran $n=1000$ independent trails for each value of $N$. The results are depicted in Figure \ref{fig:hmm_no_evidence_do} and show that the proposed recursions for the computation of the exact KLD give consistent results with Monte Carlo approximations. Moreover the ratio $D(\theta_1||\theta_0)/N$ converges very fast to the KLD rate. Note that these results differ from the ones in Figure~2 of \cite{do2003fast} where Do's bound (i.e. the exact KLD rate) seems not to be attained for $N=100$. Again, Monte Carlo estimations support our computations.

%We were not able to replicate the results for the MC approximations of Figure 2 \cite{do2003fast}. We believe that the latter results are affected by an error of numerical nature as shown by the fact that $D(\theta_1||\theta_0)/N$ seems not converge to the proposed upper bound. Indeed, as shown in the Supplementary Material, this upper bound is exactly the limit of $D(\theta_1||\theta_0)/N$. \todo{OK??} 

\begin{figure}[!t]
\centering
\includegraphics[width=0.44\textwidth]{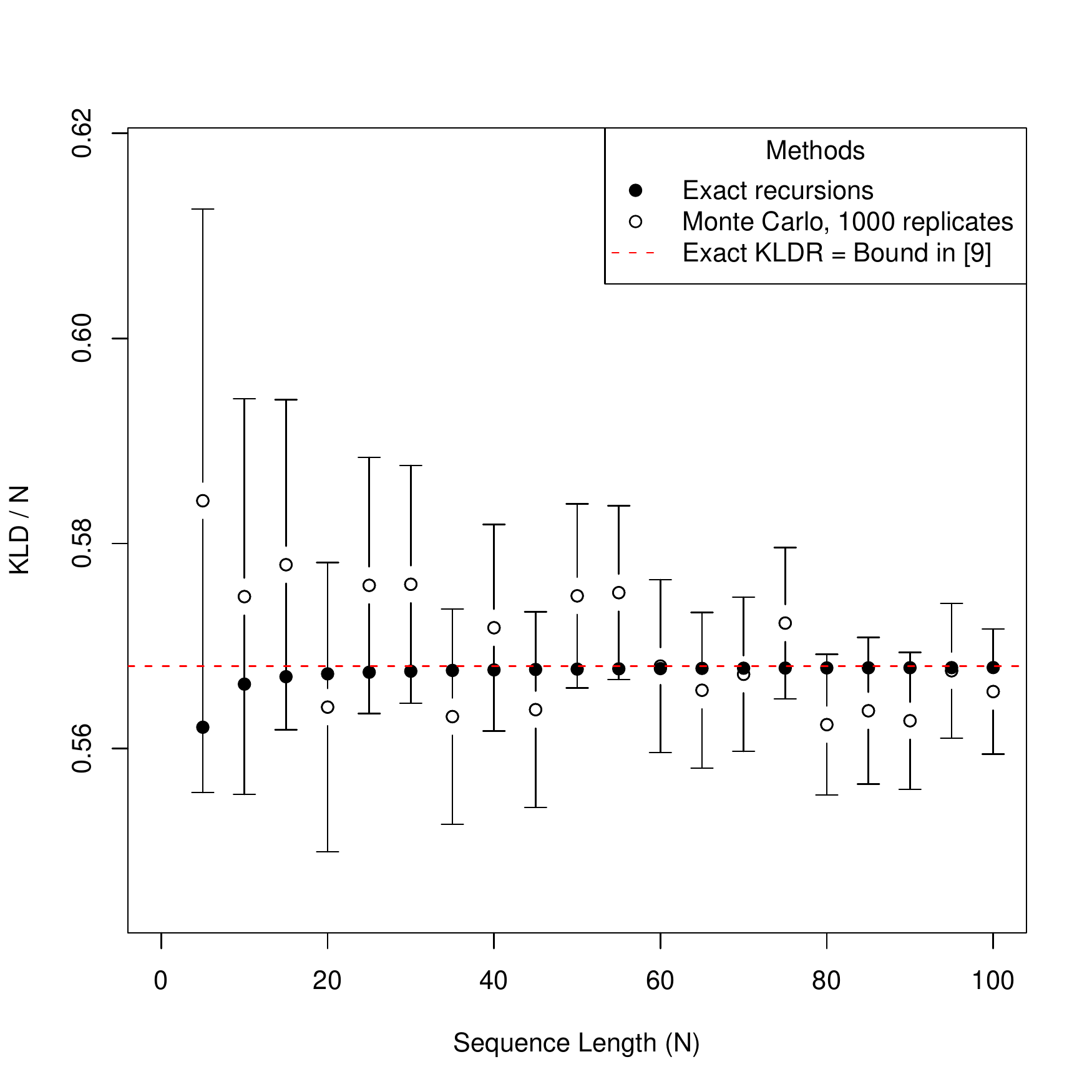}
\caption{HMMs with no evidence. 95\% confidence intervals shown for MC estimations.}
\label{fig:hmm_no_evidence_do}
\end{figure}

\textbf{HMMs with evidence.} We considered the same HMMs as above with an arbitrarily given evidence $\mathcal{E}=\{X_{1:N} = x_{1:N}\}$ (see the Supplementary Material). Figure \ref{fig:hmm_evidence} shows that the exact values of $D(\PP_{\theta_1}(S||\mathcal{E})||\PP(S_{\theta_0}|\mathcal{E}))$ computed with our recursions are consistent with Monte Carlo approximations. In this case, there is no asymptotical behavior because of the irregularity of the evidence. 

\begin{figure}[!t]
\centering
\includegraphics[width=0.44\textwidth]{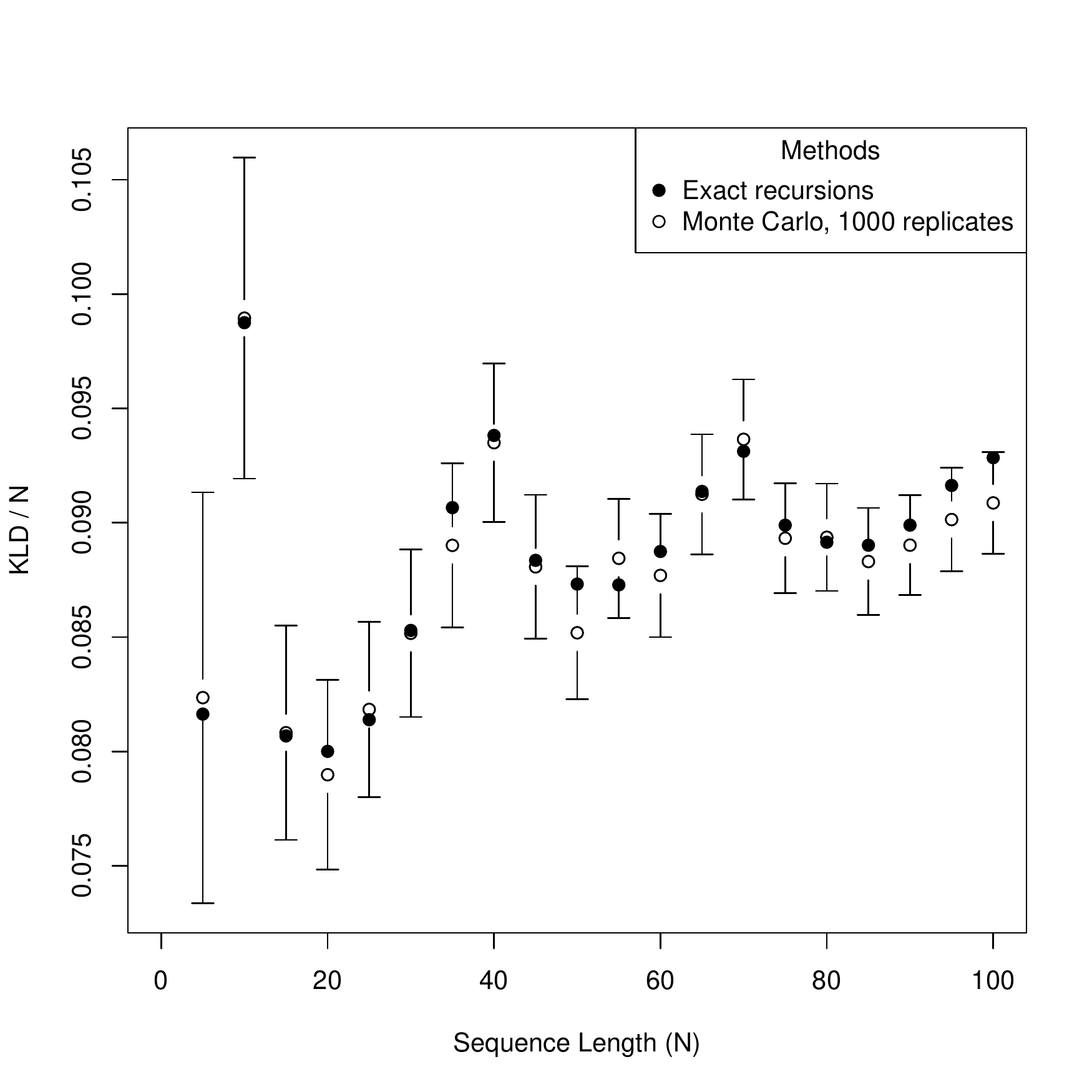}
\caption{HMMs conditioned on the observable variables. 95\% confidence intervals shown for MC estimations.}
\label{fig:hmm_evidence}
\end{figure}

\section{Conclusion}
The most important contribution of this paper is a new theoretical framework for the exact computation of the Kullback-Leibler distance between two hidden Markov trees (or models) based on backward recursions.  
This approach makes it possible to obtain new recursive formulas for computing the exact distance between the conditional probabilities of two hidden Markov models when the observable variables are given as an evidence. When no evidence is given, we derive a closed-form expression for the exact value of the KLD  which generalizes previous results about Markov chains \cite{rached2004kullback}. In the case of HMMs this generalization is not surprising as the pairs of hidden and observable variables are the elements of a Markov chain. However, quite surprisingly, at the best of our knowledge these results have not been explicitly derived earlier. 

It can be easily shown that our closed-form expression is exactly the bound suggested in \cite{do2003fast}: the proof for HMMs is given in the Supplementary Material. In \cite{do2003fast} a necessary and sufficient condition is given for the bound to be the exact value of the KLD. We argue that the suggested bound is the exact value even if this condition is not satisfied (a simple numerical counterexample is given in the Supplementary Material). The reason why the exact value of the KLD is considered as an upper bound in \cite{do2003fast} seems to be an inappropriate use of the equality condition in Lemma 1 \cite{do2003fast}. Indeed this condition is certainly sufficient but not necessary (because $\int f = \int g$ does not imply $f=g$). At last, we observe that the main difference between our formalism and the one in \cite{do2003fast} is that we suggest new recursions to compute the KLD, whereas in \cite{do2003fast} the standard backward quantities for HMTs and HMMs are used.
%Finally we prove that the well-known bound for the KLD rate between two HMMs introduced by Do \cite{do2003fast} is asymptotically sharp. 

%\todo{The formalism developed in this paper can be used to estimate the influence of individual observations on the posterior probability of the hidden states, and thus to analyze the robustness to outliers;  this is left for future work.}

%Our results can be further generalized so to consider any kind of evidence and more complex topologies (i.e. Bayesian networks). 

\section*{Acknowledgement}
V.P. is supported by the \textit{Fondation Sciences Math\'ematiques de Paris}  postdoctoral fellowship program, 2011-2013.

% if have a single appendix:
%\appendix[Proof of the Zonklar Equations]
% or
%\appendix  % for no appendix heading
% do not use \section anymore after \appendix, only \section*
% is possibly needed

% use appendices with more than one appendix
% then use \section to start each appendix
% you must declare a \section before using any
% \subsection or using \label (\appendices by itself
% starts a section numbered zero.)
%

%\appendices
%\appendix[Sketches of Proofs]

% you can choose not to have a title for an appendix
% if you want by leaving the argument blank

% use section* for acknowledgement
%\section*{Acknowledgment}

% Can use something like this to put references on a page
% by themselves when using endfloat and the captionsoff option.
\ifCLASSOPTIONcaptionsoff
  \newpage
\fi

% trigger a \newpage just before the given reference
% number - used to balance the columns on the last page
% adjust value as needed - may need to be readjusted if
% the document is modified later
%\IEEEtriggeratref{8}
% The "triggered" command can be changed if desired:
%\IEEEtriggercmd{\enlargethispage{-5in}}

% references section

% can use a bibliography generated by BibTeX as a .bbl file
% BibTeX documentation can be easily obtained at:
% http://www.ctan.org/tex-archive/biblio/bibtex/contrib/doc/
% The IEEEtran BibTeX style support page is at:
% http://www.michaelshell.org/tex/ieeetran/bibtex/
%\bibliographystyle{IEEEtran}
% argument is your BibTeX string definitions and bibliography database(s)
%\bibliography{IEEEabrv,../bib/paper}
%
% <OR> manually copy in the resultant .bbl file
% set second argument of \begin to the number of references
% (used to reserve space for the reference number labels box)

%\nocite{*}

\bibliographystyle{IEEEtran}
\bibliography{kullback}

% Generated by IEEEtran.bst, version: 1.13 (2008/09/30)
\begin{thebibliography}{10}
\providecommand{\url}[1]{#1}
\csname url@samestyle\endcsname
\providecommand{\newblock}{\relax}
\providecommand{\bibinfo}[2]{#2}
\providecommand{\BIBentrySTDinterwordspacing}{\spaceskip=0pt\relax}
\providecommand{\BIBentryALTinterwordstretchfactor}{4}
\providecommand{\BIBentryALTinterwordspacing}{\spaceskip=\fontdimen2\font plus
\BIBentryALTinterwordstretchfactor\fontdimen3\font minus
  \fontdimen4\font\relax}
\providecommand{\BIBforeignlanguage}[2]{{%
\expandafter\ifx\csname l@#1\endcsname\relax
\typeout{** WARNING: IEEEtran.bst: No hyphenation pattern has been}%
\typeout{** loaded for the language `#1'. Using the pattern for}%
\typeout{** the default language instead.}%
\else
\language=\csname l@#1\endcsname
\fi
#2}}
\providecommand{\BIBdecl}{\relax}
\BIBdecl

\bibitem{crouse1998wavelet}
M.~Crouse, R.~Nowak, and R.~Baraniuk, ``Wavelet-based statistical signal
  processing using hidden markov models,'' \emph{Signal Processing, IEEE
  Transactions on}, vol.~46, no.~4, pp. 886--902, 1998.

\bibitem{ephraim2002hidden}
Y.~Ephraim and N.~Merhav, ``Hidden markov processes,'' \emph{Information
  Theory, IEEE Transactions on}, vol.~48, no.~6, pp. 1518--1569, 2002.

\bibitem{Rabiner89atutorial}
L.~R. Rabiner, ``A tutorial on hidden markov models and selected applications
  in speech recognition,'' in \emph{Proceedings of the IEEE}, 1989, pp.
  257--286.

\bibitem{silva2008upper}
J.~Silva and S.~Narayanan, ``Upper bound kullback--leibler divergence for
  transient hidden markov models,'' \emph{Signal Processing, IEEE Transactions
  on}, vol.~56, no.~9, pp. 4176--4188, 2008.

\bibitem{Durbin1999}
R.~Durbin, S.~R. Eddy, A.~Krogh, and G.~Mitchison, \emph{{Biological Sequence
  Analysis : Probabilistic Models of Proteins and Nucleic Acids}}.\hskip 1em
  plus 0.5em minus 0.4em\relax {Cambridge University Press}, Jul. 1999.

\bibitem{DO02}
M.~N. Do and M.~Vetterli, ``{R}otation {I}nvariant {T}exture {C}haracterization
  and {R}etrieval {U}sing {S}teerable {W}avelet-{D}omain {H}idden {M}arkov
  {M}odels,'' \emph{IEEE Transactions on Multimedia}, vol.~4, no.~4, pp.
  517--527, Dec. 2002.

\bibitem{bishop2006}
C.~M. Bishop, \emph{Pattern Recognition and Machine Learning (Information
  Science and Statistics)}.\hskip 1em plus 0.5em minus 0.4em\relax Secaucus,
  NJ, USA: Springer-Verlag New York, Inc., 2006.

\bibitem{rached2004kullback}
Z.~Rached, F.~Alajaji, and L.~Campbell, ``The kullback-leibler divergence rate
  between markov sources,'' \emph{Information Theory, IEEE Transactions on},
  vol.~50, no.~5, pp. 917--921, 2004.

\bibitem{do2003fast}
M.~Do, ``Fast approximation of kullback-leibler distance for dependence trees
  and hidden markov models,'' \emph{Signal Processing Letters, IEEE}, vol.~10,
  no.~4, pp. 115--118, 2003.

\bibitem{Mohammad_Sahraeian_2011}
S.~Mohammad and E.~Sahraeian, ``A novel low-complexity hmm similarity
  measure,'' \emph{IEEE Signal Processing Letters}, vol.~18, no.~2, pp. 87--90,
  2011.

\bibitem{Zeng20101550}
J.~Zeng, J.~Duan, and C.~Wu, ``A new distance measure for hidden markov
  models,'' \emph{Expert Systems with Applications}, vol.~37, no.~2, pp. 1550
  -- 1555, 2010.

\bibitem{Silva_Narayanan_2006}
J.~Silva and S.~Narayanan, ``Average divergence distance as a statistical
  discrimination measure for hidden markov models,'' \emph{Ieee Transactions On
  Audio Speech And Language Processing}, vol.~14, no.~3, pp. 890--906, 2006.

\bibitem{xie2007posteriori}
L.~Xie, V.~Ugrinovskii, and I.~Petersen, ``A posteriori probability distances
  between finite-alphabet hidden markov models,'' \emph{Information Theory,
  IEEE Transactions on}, vol.~53, no.~2, pp. 783--793, 2007.

\bibitem{mohammad_2005_novel}
M.~Mohammad and W.~Tranter, ``A novel divergence measure for hidden markov
  models,'' in \emph{SoutheastCon, 2005. Proceedings of the IEEE}.\hskip 1em
  plus 0.5em minus 0.4em\relax IEEE, 2005, pp. 240--243.

\bibitem{silva2005upper}
J.~Silva and S.~Narayanan, ``An upper bound for the kullback-leibler divergence
  for left-to-right transient hidden markov models,'' \emph{IEEE Transactions
  on Information Theory}, 2005.

\bibitem{liu2007divergence}
P.~Liu, F.~Soong, and J.~Thou, ``Divergence-based similarity measure for spoken
  document retrieval,'' in \emph{Acoustics, Speech and Signal Processing, 2007.
  ICASSP 2007. IEEE International Conference on}, vol.~4.\hskip 1em plus 0.5em
  minus 0.4em\relax IEEE, 2007, pp. IV--89.

\bibitem{xie2005probabilistic}
L.~Xie, V.~Ugrinovskii, and I.~Petersen, ``Probabilistic distances between
  finite-state finite-alphabet hidden markov models,'' \emph{Automatic Control,
  IEEE Transactions on}, vol.~50, no.~4, pp. 505--511, 2005.

\end{thebibliography}

\clearpage

%\appendices

%\begin{center}
%{\sc Supplementary Material with Technical Details}
%\end{center}

\appendix[Supplementary Material with Technical Details]

\section*{Proofs of Results in Section II}
We give detailed proofs of some of the results from Section~II in the main paper. 
\medskip

\begin{IEEEproof}[Proof of Theorem 1] Eq.~(2) in the main paper is obtained from Eq.~(1) by observing that $\PP(X_{ua-},S_{ua-}|S_u) =$
$$
\PP(X_{ua},S_{ua}|S_u)\PP(X_{uab-},S_{uab-} \mbox{ for all }  b\in V|S_{ua}),
$$
and that $\bigcup_{b\in V}\{X_{uab-},S_{uab-}\}$ is a partition of $\{X_{ua-},S_{ua-}\}-\{X_{ua},S_{ua}\}$.  %$K_{ua\rightarrow u}(S_u)=$
 %\begin{eqnarray*} 
 %& \sum_{X_{ua},S_{ua}}  \PP_{\theta_1}(X_{ua},S_{ua}|S_u)\left( \log\frac{\PP_{\theta_1}(X_{ua},S_{ua}|S_u)}{\PP_{\theta_0}(X_{ua},S_{ua}|S_u)} + \right.  \\
%&  \sum_{X_{ua-},S_{ua-}}  \left. \PP_{\theta_1}(X_{ua-},S_{ua-}|S_{ua})\log\frac{\PP_1(X_{ua-},S_{ua-}|S_{ua})}{\PP_{\theta_0}(X_{ua-},S_{ua-}|S_{ua})}\right) =  \\
 %& \sum_{X_{ua},S_{ua}} \PP_1(X_{ua},S_{ua}|S_u)\left( \log\frac{\PP_1(X_{ua},S_{ua}|S_u)}{\PP_0(X_{ua},S_{ua}|S_u)} + \right.   \\
%& \sum_{b\in V}\sum_{X_{uab-},S_{uab-}} \left. \PP_{\theta_1}(X_{uab-},S_{uab-}|S_{ua})\log\frac{\PP_{\theta_1}(X_{uab-},S_{uab-}|S_{ua})}{\PP_{\theta_0}(X_{uab},S_{uab}|S_{ua})}\right).  
%\end{eqnarray*}
\end{IEEEproof}

\medskip

In order to prove Corollary 2, first we prove the following lemma: 
\begin{lemma}
If the transition and emission probabilities are the same across the whole tree, then:
\label{prop:tree_inward}
$$
K_{ua\rightarrow u}(S_u) = k_{ua}(S_u) + \sum_{S_{ua}}\PP_{\theta_1}(S_{ua}|S_u)\sum_{b\in V}K_{uab\rightarrow ua}(S_{ua}),
$$
where $k_{ua}(S_u)$ does not depend on $u$ and $a$: $k_{ua}(S_u=r)=$ 
$$
\sum_{x,s} \pi_{\theta_1}(r,s)e_{\theta_1}(s,x)\log\frac{ \pi_{\theta_1}(r,s)e_{\theta_1}(s,x)}{ \pi_{\theta_0}(r,s)e_{\theta_0}(s,x)} =
$$
$D[\PP_{\theta_1}(X_0,S_0|S_{\emptyset}=r)|| \PP_{\theta_0}(X_0,S_0|S_{\emptyset}=r)]=k(r)$.
Moreover
$$
D(\theta_1|| \theta_0) = k_{\emptyset} + \sum_{S_{\emptyset}}\PP_{\theta_1}(S_{\emptyset})\sum_{a\in V}K_{a\rightarrow\emptyset}(S_{\emptyset}),
$$
where 
$$
k_{\emptyset} = \sum_{x,s} \mu_{\theta_1}(s)e_{\theta_1}(s,x)\log\frac{ \mu_{\theta_1}(s)e_{\theta_1}(s,x)}{ \mu_{\theta_0}(s)e_{\theta_0}(s,x)}=
$$
$D[\PP_{\theta_1}(X_{\emptyset},S_{\emptyset})|| \PP_{\theta_0}(X_{\emptyset},S_{\emptyset})]$.
\end{lemma}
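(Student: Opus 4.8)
The plan is to specialize the general recursions of Theorem~\ref{prop:tree_inward_initial} to the homogeneous setting, where the only real work is to exploit two structural facts: that the one-step conditional law $\PP_\theta(X_{ua},S_{ua}\mid S_u)$ factors through the position-independent parameters, and that each inward quantity $K_{uab\to ua}$ depends on its subtree only through $S_{ua}$, not through the emitted symbol $X_{ua}$.

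First I would record the factorization. From the model's factorization together with the conditional independence $X_{ua}\perp S_u \mid S_{ua}$ encoded in the tree, one has $\PP_\theta(X_{ua}=x,S_{ua}=s\mid S_u=r)=\pi^{ua}_\theta(r,s)\,e^{ua}_\theta(s,x)$. Under the homogeneity hypothesis $\pi^{ua}=\pi$ and $e^{ua}=e$ for every index, so this equals $\pi_\theta(r,s)\,e_\theta(s,x)$, a quantity that no longer depends on $u$ or $a$. I would then split the summand in Eq.~(\ref{eq:tree_inward_initial}) into its logarithmic part and its inward part. Substituting the factorization into the logarithmic part turns $\sum_{X_{ua},S_{ua}}\PP_{\theta_1}(\cdots)\log[\PP_{\theta_1}/\PP_{\theta_0}]$ into $\sum_{x,s}\pi_{\theta_1}(r,s)e_{\theta_1}(s,x)\log[\pi_{\theta_1}(r,s)e_{\theta_1}(s,x)/(\pi_{\theta_0}(r,s)e_{\theta_0}(s,x))]$, which is exactly $k_{ua}(S_u=r)=k(r)$; being written purely in terms of $r$ and the common parameters, it manifestly does not depend on $u,a$, and it is recognizable as $D[\PP_{\theta_1}(X_0,S_0\mid S_\emptyset=r)\,\|\,\PP_{\theta_0}(X_0,S_0\mid S_\emptyset=r)]$. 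For the inward part, since $K_{uab\to ua}(S_{ua})$ carries no dependence on $X_{ua}$, I would pull the sum over $X_{ua}$ inside and marginalize, using $\sum_{X_{ua}}\PP_{\theta_1}(X_{ua},S_{ua}\mid S_u)=\PP_{\theta_1}(S_{ua}\mid S_u)$, which leaves $\sum_{S_{ua}}\PP_{\theta_1}(S_{ua}\mid S_u)\sum_{b\in V}K_{uab\to ua}(S_{ua})$. Adding the two parts yields the stated recursion.

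Finally, the displayed formula for $D(\theta_1\|\theta_0)$ follows by the identical argument applied to Eq.~(\ref{eq:tree_kld_initial}) in place of Eq.~(\ref{eq:tree_inward_initial}): the root law factors as $\PP_\theta(X_\emptyset,S_\emptyset)=\mu_\theta(S_\emptyset)e_\theta(S_\emptyset,X_\emptyset)$, its logarithmic contribution collapses to $k_\emptyset=D[\PP_{\theta_1}(X_\emptyset,S_\emptyset)\,\|\,\PP_{\theta_0}(X_\emptyset,S_\emptyset)]$, and the inward contribution marginalizes over $X_\emptyset$ exactly as above. I do not expect a genuine obstacle here; the only delicate point is bookkeeping, namely verifying that the factorization through $\pi$ and $e$ and the $X_{ua}$-independence of the inward terms both hold at every node, so that summing out the emitted symbols is legitimate and the logarithmic term is truly position-independent.
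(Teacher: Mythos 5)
Your proposal is correct and follows essentially the same route as the paper's proof: split Eq.~(\ref{eq:tree_inward_initial}) into its logarithmic and inward parts, use homogeneity (via the factorization $\pi_\theta(r,s)e_\theta(s,x)$) to identify the log term with the position-independent $k(\cdot)$, and marginalize out $X_{ua}$ in the inward term since $K_{uab\to ua}$ depends only on $S_{ua}$. The only difference is cosmetic: you spell out the factorization and the root-level case explicitly, whereas the paper states the first as ``straightforward'' and proves only the first equation.
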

\medskip
\begin{IEEEproof}[Proof of Lemma 1] 
We only prove the first equation. Because of Eq.~(2): $K_{ua\rightarrow u}(S_u)=$
\begin{eqnarray*}
\sum_{X_{ua},S_{ua}} & \PP_{\theta_1}(X_{ua},S_{ua}|S_u)\log\frac{\PP_{\theta_1}(X_{ua},S_{ua}|S_u)}{\PP_{\theta_0}(X_{ua},S_{ua}|S_u)} + \\
\sum_{X_{ua},S_{ua}} & \PP_{\theta_1}(X_{ua},S_{ua}|S_u)\sum_{b\in V}K_{uab\rightarrow ua}(S_{ua}) 
\end{eqnarray*}
The first term in this sum does not depend on $u$ and $a$ since the transition and emission probabilities are constant; it is straightforward to obtain its expression $k(\cdot)$. The second term is equal to
\begin{multline*}
\sum_{S_{ua}} \PP_{\theta_1}(S_{ua}|S_u)\sum_{b\in V} K_{uab\rightarrow ua}(S_{ua})\sum_{X_{ua}}\PP_{\theta_1}(X_{ua}|S_{ua}) = \\
\sum_{S_{ua}} \PP_{\theta_1}(S_{ua}|S_u)\sum_{b\in V} K_{uab\rightarrow ua}(S_{ua}).
\end{multline*}
\end{IEEEproof}

\begin{IEEEproof}[Proof of Corollary 2]
The key point here is to prove that  for each $a, a' \in V$: $K_{ua\rightarrow u}(S_u) = K_{ua'\rightarrow u}(S_u)$; we will do it by induction on the levels of the tree. By definition of inward quantity and by the lemma above, if $X_{ua}$ is a leaf, with $a\in V$, then $K_{ua\rightarrow u}(S_u)=k(S_u)$ for each $a$. 

Now suppose that %the claim is true up to the level of the variables $S_{uab}$, for all $u$ of the same given length $m$ and $a,b\in V$: 
$K_{uab\rightarrow ua}(S_{ua}) = K_{uab'\rightarrow ua}(S_{ua})$ $\forall\, a,b,b'\in V$ and $\forall u$ of a given length $m$ (\emph{inductive step}). In particular, for each $a,b\in V$ and $u$ of length $m$, we have $K_{uab\rightarrow ua}(S_{ua}) = K_{u00\rightarrow u0}(S_{u0})$. It is now easy to see that %the claim is also true for the levels of the variables $S_{ua}$, i.e. 
$K_{ua\rightarrow u}(S_u)=K_{ua'\rightarrow u}(S_u)$ for each $a,a'\in V$ and $u$ of length $m$: by the lemma above
\begin{multline*}
K_{ua\rightarrow u}(S_u) = k(S_u) + \sum_{S_{ua}}\PP_{\theta_1}(S_{ua}|S_u)\sum_{b\in V}K_{uab\rightarrow ua }(S_{ua})\\ =k(S_u) + \sum_{S_{u0}}\PP_{\theta_1}(S_{u0}|S_u)\sum_{b\in V}K_{u00\rightarrow u0}(S_{u0}) \\
=k(S_u) + C\sum_{S_{u0}}\PP_{\theta_1}(S_{u0}|S_u)K_{u00\rightarrow u0}(S_{u0}).
\end{multline*}
\end{IEEEproof}

\section*{Comparison with~\cite{do2003fast}}
We show that the bound suggested by Do in~\cite{do2003fast} is the actual value of the KLD. For the sake of simplicity we will only consider HMMs, however it is straightforward to generalize the following to more general HMTs. 

The closed form expression for the exact value of the KLD between HMMs (no evidence) is 
\begin{equation*}
\label{eq:kld_sum_hmm}
D(\theta_1||\theta_0) = k_\emptyset + \boldsymbol{\mu}_{\theta_1} ( \boldsymbol{I} + \boldsymbol{\pi}_{\theta_1} + \ldots +\boldsymbol{\pi}^{N-2}_{\theta_1}) \boldsymbol{k}.
\end{equation*}
For comparison purposes, we rewrite $k_\emptyset$ and $\boldsymbol{k}$ as
$$
k_{\emptyset} = D(\boldsymbol{\mu}_{\theta_1}||\boldsymbol{\mu}_{\theta_0}) + \boldsymbol{\mu}_{\theta_1}D(\boldsymbol{e}_{\theta_1}||\boldsymbol{e}_{\theta_0}) = D(\boldsymbol{\mu}) + \boldsymbol{\mu}_{\theta_1}D(\boldsymbol{e})
$$
$$
\boldsymbol{k} = D(\boldsymbol{\pi}_{\theta_1}||\boldsymbol{\pi}_{\theta_0}) +  \boldsymbol{\pi_{\theta_1}}D(\boldsymbol{e}_{\theta_1}||\boldsymbol{e}_{\theta_0}) = D(\boldsymbol{\pi}) +  \boldsymbol{\pi}_{\theta_1}D(\boldsymbol{e}),
$$
where the $j$th component of the vector $D(\boldsymbol{e}):=D(\boldsymbol{e}_{\theta_1}||\boldsymbol{e}_{\theta_0})$ is $D(\boldsymbol{e}_{\theta_1}(j,\cdot)||\boldsymbol{e}_{\theta_0}(j,\cdot))$, and similarly  the $j$th component of $D(\boldsymbol{\pi}):=D(\boldsymbol{\pi}_{\theta_1}||\boldsymbol{\pi}_{\theta_0})$ is $D(\boldsymbol{\pi}_{\theta_1}(j,\cdot)||\boldsymbol{\pi}_{\theta_0}(j,\cdot))$. The reader should not confound Do's symbol $\boldsymbol{e}$, which is $D(\boldsymbol{e})$ in our notations, with our emission matrix $\boldsymbol{e}$. Moreover Do's vector $\boldsymbol{d}$ becomes $D(\boldsymbol{\pi}) + D(\boldsymbol{e})$ in our notations.

Using these notations, Do's upper bound in the case of HMMs - Eq.~(19) in~\cite{do2003fast} - is $U=$
\begin{equation*}
\label{eq:dobound}
D(\boldsymbol{\mu}) + \boldsymbol{\mu}_{\theta_1} \left( \sum_{i=1}^{N-1} \boldsymbol{\pi}_{\theta_1}^{i-1}[D(\boldsymbol{\pi}) + D(\boldsymbol{e})]  + \boldsymbol{\pi}_{\theta_1}^{N-1}D(\boldsymbol{e}) \right)
\end{equation*}
\begin{proposition}
$D(\theta_1||\theta_0)=U$.
\end{proposition}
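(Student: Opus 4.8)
The plan is to prove the identity by pure algebraic substitution: I would plug the two decompositions of $k_\emptyset$ and $\boldsymbol{k}$ into the closed-form expression for $D(\theta_1||\theta_0)$ and then reorganize the resulting matrix-geometric series until it coincides term-by-term with $U$. To lighten notation, write $\boldsymbol{m}=\boldsymbol{\mu}_{\theta_1}$, $\boldsymbol{P}=\boldsymbol{\pi}_{\theta_1}$, and $\boldsymbol{S}=\boldsymbol{I}+\boldsymbol{P}+\cdots+\boldsymbol{P}^{N-2}=\sum_{i=0}^{N-2}\boldsymbol{P}^i$. Substituting $k_\emptyset=D(\boldsymbol{\mu})+\boldsymbol{m}D(\boldsymbol{e})$ and $\boldsymbol{k}=D(\boldsymbol{\pi})+\boldsymbol{P}D(\boldsymbol{e})$ into $D(\theta_1||\theta_0)=k_\emptyset+\boldsymbol{m}\boldsymbol{S}\boldsymbol{k}$ gives $D(\theta_1||\theta_0)=D(\boldsymbol{\mu})+\boldsymbol{m}D(\boldsymbol{e})+\boldsymbol{m}\boldsymbol{S}D(\boldsymbol{\pi})+\boldsymbol{m}\boldsymbol{S}\boldsymbol{P}D(\boldsymbol{e})$.

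Next I would sort these four summands into the three groups that appear in $U$, with the $D(\boldsymbol{\mu})$ term matching trivially. For the $D(\boldsymbol{\pi})$ contribution, the reindexing $\boldsymbol{S}=\sum_{i=0}^{N-2}\boldsymbol{P}^i=\sum_{i=1}^{N-1}\boldsymbol{P}^{i-1}$ shows $\boldsymbol{m}\boldsymbol{S}D(\boldsymbol{\pi})=\boldsymbol{m}\sum_{i=1}^{N-1}\boldsymbol{P}^{i-1}D(\boldsymbol{\pi})$, which is exactly the $D(\boldsymbol{\pi})$ part of $U$. The only genuine work is combining the two $D(\boldsymbol{e})$ terms: using $\boldsymbol{S}\boldsymbol{P}=\sum_{i=0}^{N-2}\boldsymbol{P}^{i+1}=\boldsymbol{P}+\cdots+\boldsymbol{P}^{N-1}$, the total emission contribution collapses to $\boldsymbol{m}(\boldsymbol{I}+\boldsymbol{S}\boldsymbol{P})D(\boldsymbol{e})=\boldsymbol{m}\sum_{i=0}^{N-1}\boldsymbol{P}^iD(\boldsymbol{e})$.

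Finally I would check this against the emission part of $U$, namely $\boldsymbol{m}(\sum_{i=1}^{N-1}\boldsymbol{P}^{i-1}+\boldsymbol{P}^{N-1})D(\boldsymbol{e})$: reindexing the first sum to $\sum_{i=0}^{N-2}\boldsymbol{P}^i$ and absorbing the isolated $\boldsymbol{P}^{N-1}$ yields $\sum_{i=0}^{N-1}\boldsymbol{P}^i$, completing the match and hence the identity. The one point requiring care — and the only place the argument could slip — is the off-by-one bookkeeping in these index shifts. Structurally, the highest power $\boldsymbol{P}^{N-1}$ in the emission sum arises precisely because the top term $\boldsymbol{P}^{N-2}$ of $\boldsymbol{S}$ multiplies the $\boldsymbol{P}D(\boldsymbol{e})$ piece of $\boldsymbol{k}$, and this is exactly the isolated $\boldsymbol{\pi}_{\theta_1}^{N-1}D(\boldsymbol{e})$ term that Do writes separately; I would flag this as the explanation for why that extra summand appears in Do's expression. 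Since every manipulation above is an equality and no inequality is ever invoked, the conclusion is $D(\theta_1||\theta_0)=U$ rather than merely $D(\theta_1||\theta_0)\le U$, which is the whole point of the comparison.
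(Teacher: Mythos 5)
Your proof is correct and is essentially the paper's own argument: both substitute the decompositions $k_\emptyset=D(\boldsymbol{\mu})+\boldsymbol{\mu}_{\theta_1}D(\boldsymbol{e})$ and $\boldsymbol{k}=D(\boldsymbol{\pi})+\boldsymbol{\pi}_{\theta_1}D(\boldsymbol{e})$ into the closed-form KLD expression and then reindex the matrix-geometric series so that the extra $\boldsymbol{\pi}_{\theta_1}^{N-1}D(\boldsymbol{e})$ term emerges, matching $U$ term by term. Your version merely spells out the index bookkeeping that the paper compresses into a single displayed chain of equalities.
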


\begin{proof}
\begin{multline*}
D(\theta_1||\theta_0)=D(\boldsymbol{\mu}) + \boldsymbol{\mu}_{\theta_1}D(\boldsymbol{e})\, +  \\ \boldsymbol{\mu}_{\theta_1}( \boldsymbol{I} + \boldsymbol{\pi}_{\theta_1} + \ldots +\boldsymbol{\pi}^{N-2}_{\theta_1}) 
(D(\boldsymbol{\pi})+\boldsymbol{\pi}_{\theta_1}D(\boldsymbol{e})) = \\
D(\boldsymbol{\mu}) + \boldsymbol{\mu}_{\theta_1} \left( \sum_{i=1}^{N-1} \boldsymbol{\pi}_{\theta_1}^{i-1}[D(\boldsymbol{\pi}) + D(\boldsymbol{e})]  + \boldsymbol{\pi}_{\theta_1}^{N-1}D(\boldsymbol{e}) \right).
\end{multline*}
\end{proof}

In~\cite{do2003fast} it is explained that $D(\theta_1||\theta_0)=U$ if and only if 
$$
\PP_{\theta_1}(S=s|X=x) = \PP_{\theta_0}(S=s|X=x), \mbox{ for all } s,x.
$$

We observe that this condition is not fulfilled in general, whereas $D(\theta_1||\theta_0)=U$ is always true as shown above. For instance, consider the HMMs of length 10 with the same parameters as in Eq.~(22)~\cite{do2003fast}. For the arbitrarily fixed
$$
s = (1,1,1,1,1,2,2,2,2,2)
$$
$$
x = (1,1,1,2,2,2,3,3,3,3)
$$
we have
$\PP_{\theta_1}(S=s|X=x)=0.91,\,\PP_{\theta_0}(S=s|X=x)=0.10$ and $D(\theta_1||\theta_0)=U=0.071$.

\section*{Computation of $\sum_{i=0}^{N-2} \boldsymbol{\pi}^i \boldsymbol{k}$}

When considering HMMs with no evidence, the exact KLD expression involves a term of the form $\sum_{i=0}^{N-2} \boldsymbol{\pi}^i \boldsymbol{k}$ where $\boldsymbol{\pi}$ is a stochastic matrix (of order $d$, where $d$ is the number of hidden states), and $\boldsymbol{k}$ is a column-vector. Note that,  because $\boldsymbol{\pi}$ is stochastic, $\boldsymbol{I}-\boldsymbol{\pi}$ is not invertible. Is it possible to compute this sum with a complexity smaller than $O(d^2 N)$? The answer to this question is indeed ``yes'', but a little bit of linear algebra is required.

Let us assume that there exists $\mathbf{P}=(\mathbf{v}_1,\ldots,\mathbf{v}_d)$ a basis of (column-) eigenvectors of $\boldsymbol{\pi}$ such that $\boldsymbol{\pi}=\mathbf{P}\mathbf{D}\mathbf{P}^{-1}$, where $D=\text{diag}(\lambda_1,\ldots,\lambda_d)$ is the diagonal matrix of the corresponding eigenvalues. For the sake of simplicity, we assume that $\lambda_1=1.0$ and that $|\lambda_j|<1$ if $j \neq 1$ (for example, this is true if $\boldsymbol{\pi}$ is \emph{primitive}, which means that $\exists i$ such as $\boldsymbol{\pi}^i>0$). Nevertheless, the following method can be easily extended to the case when the eigenvalue $1.0$ has a multiplicity greater than $1$.

By defining the invertible matrix $\widetilde{\boldsymbol{\pi}}=\mathbf{P}\text{diag}(0,\lambda_2\ldots,\lambda_d)\mathbf{P}^{-1}$ and decomposing $\boldsymbol{k}$ with respect to the eigenvector basis as $\boldsymbol{k}=k_{1}\mathbf{v}_1 + \widetilde{\boldsymbol{k}}$, we obtain 
$$
\boldsymbol{\pi}\boldsymbol{k}=k_1 \mathbf{v}_1 + \widetilde{\boldsymbol{\pi}} \widetilde{\boldsymbol{k}}.
$$
It follows that
\begin{eqnarray*}
\sum_{i=0}^{N-2} \boldsymbol{\pi}^i \boldsymbol{k}&=&(N-1)k_1 \mathbf{v}_1+\sum_{i=0}^{N-2} \widetilde{\boldsymbol{\pi}}^i \widetilde{\boldsymbol{k}}\\
&=&(N-1)k_1 \mathbf{v}_1 + (\boldsymbol{I}-\widetilde{\boldsymbol{\pi}}^{N-1})(\boldsymbol{I}-\widetilde{\boldsymbol{\pi}})^{-1}\widetilde{\boldsymbol{k}}
\end{eqnarray*}
which can be computed in $O(d^3 \log_2 N)$ by obtaining $\widetilde{\boldsymbol{\pi}}^{N-1}$ through a binary decomposition of $N-1$.

\section*{Numerical Experiments}

\subsection*{HMMs with no evidence}
\label{subset:HMMnoev}
We considered the same set of parameters as in Eq.~(22)~\cite{do2003fast}. In our notations:
$$
\begin{array}{lcl}
\boldsymbol{\mu}_{\theta_1} = 
\begin{array}{cc}
(0.5 & 0.5)
\end{array}
&
\boldsymbol{\mu}_{\theta_0}=  
\begin{array}{cc}
(0.5 & 0.5)
\end{array} \\
\boldsymbol{\pi}_{\theta_1} = 
\left(
\begin{array}{cc}
0.9 & 0.1 \\
0.2 & 0.8 
\end{array}
\right)
&
\boldsymbol{\pi}_{\theta_0} = 
\left(
\begin{array}{cc}
0.7 & 0.3 \\
0.4 & 0.6 
\end{array}
\right)\\
\boldsymbol{e}_{\theta_1}=
\left(
\begin{array}{ccc}
0.1 & 0.3 & 0.6 \\
0.2 & 0.1 & 0.7
\end{array}
\right)
&
\boldsymbol{e}_{\theta_0}=
\left(
\begin{array}{ccc}
0.3 & 0.5 & 0.2 \\
0.6 & 0.2 & 0.2
\end{array}
\right)
\end{array}
$$

The stationary distribution of $\boldsymbol{\pi}_{\theta_1}$ is $\boldsymbol{\nu} = (2/3\,\,\,\,\,1/3)$: $\boldsymbol{\nu}\boldsymbol{\pi}_{\theta_{1}}=\boldsymbol{\nu}$ and $\boldsymbol{\mu}_{\theta_1}\boldsymbol{\pi}^i_{\theta_1}\rightarrow \boldsymbol{\nu}$ for large $i$.

\subsection*{HMMs with evidence}

For $N=100$ we took as evidence the vector $ x_{1:100}$ where: 1) all the components with positions $[1,10], [31,40], [61,70], [91,100]$ are equal to 1; 2) the components with positions $[11,20], [41,50], [71,80]$ are equal to 2; 3) the components with positions $[21,30], [51,60], [81,90]$ are equal to 3. For $5 \leq N\leq 95$, the components of $x_{1:N}$ are the first $N$ values of $x_{1:100}$.

\subsection*{HMTs, no evidence}

We considered the same HMTs as in Eq.~(23)~\cite{do2003fast}. All the $S$ nodes belonging to the same level have the same set of parameters. In our notations:
$$
\begin{array}{lcl}
\boldsymbol{\mu}_{\theta_1} = 
\begin{array}{cc}
(0.69 & 0.31)
\end{array}
&
\boldsymbol{\mu}_{\theta_0}=  
\begin{array}{cc}
(0.63 & 0.37)
\end{array} \\
\boldsymbol{\pi}^{0}_{\theta_1} = 
\left(
\begin{array}{cc}
0.99 & 0.01 \\
0.22 & 0.78 
\end{array}
\right)
&
\boldsymbol{\pi}^0_{\theta_0} = 
\left(
\begin{array}{cc}
0.98 & 0.02 \\
0.20 & 0.80 
\end{array}
\right)\\
\boldsymbol{\pi}^{00}_{\theta_1} = 
\left(
\begin{array}{cc}
0.99 & 0.01 \\
0.32 & 0.68 
\end{array}
\right)
&
\boldsymbol{\pi}^{00}_{\theta_0} = 
\left(
\begin{array}{cc}
0.99 & 0.01 \\
0.22 & 0.78 
\end{array}
\right)
\end{array}
$$
Each emission probability distribution $\PP(X_u|S_u)$ has a zero-mean Gaussian density with standard deviation depending on $S_u$ as follows:
$$
\begin{array}{lcl}
\sigma^{\emptyset}_{\theta_1}(1)=11.8, \sigma^{\emptyset}_{\theta_1}(2)=67.1 && \sigma^{\emptyset}_{\theta_0}(1)=24.6, \sigma^{\emptyset}_{\theta_0}(2)=74.8 
\end{array}
$$
$$
\begin{array}{lcl}
\sigma^{0}_{\theta_1}(1)=4.1, \sigma^{0}_{\theta_1}(2)=29.3 && \sigma^{0}_{\theta_0}(1)=6.9, \sigma^{0}_{\theta_0}(2)=31.9 
\end{array}
$$
$$
\begin{array}{lcl}
\sigma^{00}_{\theta_1}(1)=2.8, \sigma^{00}_{\theta_1}(2)=10.3 && \sigma^{00}_{\theta_0}(1)=3.1, \sigma^{00}_{\theta_0}(2)=14.8
\end{array}
$$
For instance, the probability density function $f_{\theta_1}(X_{10}|S_{10})$ is $\mathcal{N}(0,\sigma_{\theta_1}^{00}(1))$ if $S_{10}=1$ and $\mathcal{N}(0,\sigma_{\theta_1}^{00}(2))$ if $S_{10}=2$.

Eq.~(2) becomes $K_{ua\rightarrow u}(S_u)=$
\begin{multline*}
\sum_{S_{ua}}\PP_{\theta_1}(S_{ua}|S_{u})\left(\log\frac{\PP_{\theta_1}(S_{ua}|S_u)}{\PP_{\theta_0}(S_{ua}|S_u)} \right. + \\ \left. \int_{X_{ua}}f_{\theta_1}(X_{ua}|S_{ua}) \log\frac{f_{\theta_1}(X_{ua}|S_{ua})}{f_{\theta_0}(X_{ua}|S_{ua})}  + \sum_{b\in V}  K_{uab\rightarrow ua}(S_{ua})\right) 
\end{multline*}
\begin{multline*}
= \sum_{S_{ua}}\PP_{\theta_1}(S_{ua}|S_u)\left[\log\frac{\PP_{\theta_1}(S_{ua}|S_u)}{\PP_{\theta_0}(S_{ua}|S_u)}+ \right. \\ D\left[\mathcal{N}(0,\sigma_{\theta_1}^{ua}(S_{ua}))||\mathcal{N}(0,\sigma_{\theta_0}^{ua}(S_{ua}))\right] + \left. \sum_{b\in V}  K_{uab\rightarrow ua}(S_{ua})\right],
\end{multline*}
where $a\in V$. If $X_{ua}$ is a leaf then $K_{uab\rightarrow ua}(S_{ua})=0$.  If $X_{ua}$ is a not leaf then $K_{uab\rightarrow ua}$ does not depend on $b\in V$ and therefore
$$
\sum_{b\in V}  K_{uab\rightarrow ua}(S_{ua}) = 2\cdot K_{ua0\rightarrow ua}(S_{ua}).
$$
Similarly, one can obtain the formula for the KLD. 

At last, we recall that the KLD between two Gaussians can be computed with the well known formula 
\begin{multline*}
D(\mathcal{N}(\mu_1,\sigma_1)||\mathcal{N}(\mu_1,\sigma_1))= \\
\frac{\sigma_1^2+(\mu_1-\mu_0)^2}{2\sigma_0^2}+ \log\frac{\sigma_0}{\sigma_1} - \frac{1}{2}.
\end{multline*}

% that's all folks
\end{document}